\documentclass[notitlepage,11pt]{article}
\usepackage{amsfonts,amsmath,amssymb,mathrsfs,url,authblk, 
geometry,esint,subfig,
comment
}
\usepackage[normalem]{ulem}
\usepackage{hyperref}
\usepackage{cleveref}

\usepackage{color}

\usepackage{graphicx}
\usepackage{epsfig}
\usepackage{float,multirow}
\usepackage[novbox]{pdfsync}

\usepackage{wasysym}
\usepackage{empheq}
\usepackage{url}

\baselineskip=14.4pt \topmargin=-0.2cm\textwidth=16.45cm 
\textheight=23cm
\oddsidemargin=-0.65cm\evensidemargin=-0.65cm\headsep=20pt

\setlength{\textheight}{21cm} \setlength{\textwidth}{15cm}
\setlength{\parindent}{0.4cm} \setlength{\topmargin}{0cm}
\setlength{\oddsidemargin}{0.8cm} \setlength{\evensidemargin}{0.8cm}

\newcommand{\RN}[1]{\textup{\uppercase\expandafter{\romannumeral#1}}}

\newcommand{\al}{\alpha}

\newcommand{\Th}{\mbox{$\mathcal{T}_h$}}

\newcommand{\R}{\mathbb{R}}

\newcommand{\h}{\; \mathsf h}
\newcommand{\q}{\; \mathsf q}

\def\bep{\begin{pmatrix}} \def\eep{\end{pmatrix}}

\def\c{\; \mathsf c}

\providecommand{\pp}[1]{\left[#1\right]} 
\providecommand{\pr}[1]{\left(#1\right)} 

\newtheorem{theorem}{Theorem}[section]
\newtheorem{lemma}[theorem]{Lemma}
\newtheorem{proposition}[theorem]{Proposition}

\newtheorem{remark}[theorem]{Remark}
\newtheorem{definition}[theorem]{Definition}

\renewcommand{\theta}{\vartheta}

\renewcommand{\thefootnote}{\fnsymbol{footnote}}

\numberwithin{equation}{section}

\def\bc{\begin{cases}
  }     
\def\ec{\end{cases}}

  \newcommand{\beq}{\begin{eqnarray}
    }
\def\eeq{\end{eqnarray}}
   \newcommand{\be}[1]{\begin{equation}\label{#1}}
\newcommand{\ee}{\end{equation}}
\def\bea{\begin{eqnarray*}}\def\ssec{\subsection}
    
\def\eea{\end{eqnarray*}} \def\la{\label}\def\fe{for example }   \def\ith{it holds that } 
    \def\sats{satisfies}  \def\saty{satisfy}     \def\thr{therefore}  

\def\I{\infty} 
  \def\T{\widetilde}
\def\PH{phase-type }  
\def\BEN{\begin{enumerate}}  \def\BI{\begin{itemize}}
\def\EEN{\end{enumerate}}   \def\EI{\end{itemize}} \def\im{\item} \def\Lra{\Longrightarrow}  \def\eqr{\eqref}  
 \newcommand{\dom}{{\mathcal D}}\newcommand{\frd}[2]{\frac{\text{d} #1}{\text{d} #2}}
\def\mR{\mathcal R} \def\oth{otherwise}
\def\g{\gamma}     \def\b{\beta}

 \def\lam{\lambda}  \def\al{\alpha}
\def\Fr{Furthermore, }
   \def\mbe{may be expressed in terms of } 
    
   \def\wrt{with respect to }
  \def\resp{respectively} \def\how{however} \def\fno{from now on} 
 \def\eqr{\eqref}  
\def\wk{well-known}

\def\fr{\frac} \def\im{\item}
\newcommand{\s}{\;\mathsf s}
\newcommand{\e}{\;\mathsf e}
\renewcommand{\i}{\;\mathsf i}
\renewcommand{\r}{\; \mathsf r}
\renewcommand{\v}{\; \mathsf v}
\def\eeD{\end{definition}} \def\beD{\begin{definition}}
\def\beR{\begin{remark}} \def\eeR{\end{remark}}
\def\beL{\begin{lemma}} \def\eeL{\end{lemma}}\newcommand{\f}[2]{\frac{#1}{#2}}

      \def\muR{\mu_{\mR}}\def\bb{\bff \b}  \def\ba{\; \bff a} \def\vi{\vec \i \;}  
    \def\va{\vec \alpha } \def\vJ{\vec J }
     \def\vr{\; \vec {\mathsf r}}\newcommand{\bff}[1]{{\mbox{\boldmath$#1$}}}
    \def\beP{\begin{proposition}} \def\eeP{\end{proposition}}
    \newcommand{\vI}[1]{\vec I^{(#1)}}  
    \def\Th{\Theta}
    
    \long\def\symbolfootnote[#1]#2{
\begingroup
\def\thefootnote{\fnsymbol{footnote}}\footnote[#1]{#2}
\endgroup}
\def\fn{\symbolfootnote}
\def\Thr{Therefore, }\def\det{deterministic }\def\corr{corresponding }\def\Fno{From now on, }
\newcommand{\lt}{\left}\newcommand{\rt}{\right}
\newtheorem{example}{Example}\def\beXa{\begin{example}} \def\eeXa{\end{example}}

\def\a{\; \mathsf a} \def\y{\; \mathsf y}  \def\frt{furthermore }

\newcommand{\figu}[3]{
\begin{figure}[H]
\centering
\includegraphics[scale=#3]{#1}
\caption{#2\label{f:#1}}
\end{figure}
}
\renewcommand{\(}{\left(}
\renewcommand{\)}{\right)}

\newcommand{\firstargument}{}
\newcommand{\address}[2][]{\renewcommand{\firstargument}{#1}\gdef\@address{#2}}

\date{\today}

\title {
A review of matrix SIR Arino  epidemic models 
}


 \author[1]{Florin Avram \footnote{Corresponding author. E-mail:
     Florin.Avram@univ-Pau.fr}}
 \author[2]{Rim Adenane}
 \author[3]{David I. Ketcheson}
 \affil[1]{Laboratoire de Math\'{e}matiques Appliqu\'{e}es, Universit\'{e} de Pau, France 64000 }
 \affil[2]{Département des Mathématiques, Université Ibn-Tofail, Kenitra, Maroc 14000}
 \affil[3]{King Abdullah University of Science and Technology, Thuwal 23955, Arabie saoudite}

\begin{document}

\maketitle

\begin{abstract}

Many of the models used nowadays in mathematical epidemiology, in particular in COVID-19 research, belong  to a certain sub-class of compartmental  models whose classes  may be divided into three ``$(x,y,z)$"  groups, which we will call  \resp ``susceptible/entrance, diseased, and output" (in the classic SIR case, there is only one class of each type). Roughly, the ODE dynamics of these models contains only linear terms, with the exception of products between $x$ and $y$ terms. It has long been noticed that   the basic reproduction number  $\mR$ has a very simple formula \eqr{R} in terms of the matrices which define the model, and an explicit  first integral  formula \eqr{Y} is also  available. These results can be traced back at least to \cite{Arino} and \cite{Feng}, \resp, and may be viewed as the ``basic laws of SIR-type epidemics"; \how\ many  papers  continue to reprove them  in particular instances (by the next-generation matrix method or by direct computations, which are unnecessary). This motivated us to redraw the attention to these basic laws and provide a self-contained reference of  related formulas for $(x,y,z)$ models.  {We propose to rebaptize the class to which they apply as matrix SIR   epidemic models, abbreviated as SYR, to emphasize  the similarity to the classic SIR case.  For the case of one susceptible class we propose  to use the name SIR-PH, due to a simple probabilistic interpretation  as  SIR models where the exponential infection time has been replaced by a PH-type distribution.}
We note  that to each SIR-PH  model,  one may associate   a  scalar quantity $Y(t)$  which \sats ``classic  SIR   relations" -- see \eqr{Y}. In  the case of several  susceptible classes this generalizes to \eqr{Yt}; in a future paper, we will show that \eqr{Y}, \eqr{Yt}    may be used to obtain approximate control policies which
compare well with the optimal control of the original model. 

\end{abstract}
\tableofcontents
\section{Introduction}
{\bf Motivation}.
{Mathematical epidemiology  may be said to have started with the SIR ODE model, which saw its birth in the work of
Kermack–McKendrick  \cite{KeMcK}. This was initially applied to model the  Bombay plague of 1905-06, and later to  measles \cite{earn2008light}, smallpox,
chickenpox, mumps, typhoid fever and diphtheria, and recently to the COVID-19 pandemic -- see \fe\ \cite{Schaback,bacaer2020modele,Ketch,Charp,Djidjou,Sofonea,alvarez2020simple,horstmeyer2020balancing,
di2020optimal,Franco,baker2020reactive,caulkins2020long,caulkins2021optimal}, to cite just a few representatives of a huge literature.}

 Note that during the COVID-19 pandemic, researchers have relied mostly on models with  quadratic interactions (linear force of infection), which belong \frt\ to a particular class \cite{Arino,Andr,Riano,Fre20} of ``$(x,y,z)$" models.  Here $x$ denotes ``entrance/susceptible" classes, $y$ denotes diseased classes, which must converge asymptotically to $0$,  and $z$ denotes output classes. These models are very useful; to make references to them easier, we propose to call them matrix-SIR (SYR)  models, and  also SIR-PH \cite{Riano}, when $x \in \R^1$.

{\bf Contents}. We begin by recalling in Section \ref{s:SIR} several  basic explicit formulas  for the SIR model.  Section \ref{s:Feng} presents the corresponding SIR-PH generalizations, and Section \ref {s:exa} offers some applications: the SEIHRD model
\cite{ivorra2017stability,Palmer,pazos2020control,nave2020theta,ramos2021simple} which adds to the classic SEIR (susceptible+exposed+infectious+recovered)  a  hospitalized (H) class  and a dead class (D), the SEICHRD model \cite{Kantner}
which adds a critically ill class (C),
the SEIARD  \cite{de2020data} and SEIAHR/SEIRAH(D) models \cite{deng2021extended,otoo2021estimating,wang2020evolving,kucharski2020early,hayhoe2020data,khatua2020fuzzy,prague2020population}, which add an asymptomatic class (A), and the S$I^{3}$QR model \cite{shaw2021reproductive}.
This is just a sample chosen from some of our favorite COVID papers. We note in passing that they seem though all unaware of the existence of the Arino and Feng formulas \eqr{R},  \eqr{Y}. Like most  papers nowadays, they do not recognize the matrix-SIR particular case, and    continue to reprove it   (by the Jacobian, next-generation matrix, ot Chavez-Feng-Huang methods for $\mR$ \cite{Mart}, or by direct computations for the Feng formula), which have become  superfluous once the matrix-SIR particular case is recognized. We also note in passing that the concept of epidemic still seems to lack a  mathematical definition. A definition of the most common particular case is offered in \cite{Breda}; this framework is  more general  than matrix SIR by allowing age-dependence, but the Feng invariant is not discussed  there.

Finally, Section \ref {s:het}  reviews briefly the
case of several classes of susceptibles. This topic requires further development; we include it however due to the recognized importance of heterogeneity factors. 

\section{The classic Kermack–McKendrick SIR epidemic  model \la{s:SIR}}

The SIR process $(S(t),I(t), R(t), t \geq 0)$ divides a population of size $N$ undergoing an epidemic into three classes called ``susceptibles, infectives and  removed". One may also work  with the \corr fractions $\s(t)=\fr{S(t)}N,\i(t)=\fr{I(t)}N, $
and  $\r(t)=1-\s(t)-\i(t)$.
 It is assumed that only susceptible
individuals can get infected.  After having been infectious for
some time, an
individual recovers and may not become susceptible again. ``Viewed from far away", this yields the  SIR model with demography  \cite{kermack1927contribution,Chavez}
\begin{align}
S'(t)&=-\fr{\b}N S (t) I (t)+ \xi \pr{N  - S(t)},\nonumber\\ 
I'(t)&= I(t)\pr{\fr{\b}N  S(t)-\g -\xi},\la{SIR}
\\
R'(t)&= \g  I(t)-\xi R(t),\nonumber 
\end{align}
where
\BEN \im N is the total, constant population size. \im $R'(t)$, the number of removed per unit time, is the only quantity which is clearly observable, at least in the easy case when the removed are dead, as was the case of the original study of the Bombay plague \cite{kermack1927contribution}. \im $\xi$ is the population death rate, assumed to equal the birth rate.
\im $\g$ is the  removal rate of the infectious, which equals 1/duration of the infection (under the stochastic model of exponential infection durations, this is the reciprocal of the expected duration).
\im $\b$, the  infection rate, models the probability  that a contact takes place between an infected and a susceptible, and that it results in  infection.
 \EEN

Note that  \BEN
  \im The sum $S+I+R =N$ is conserved and each value is positive, so the values of $S, I, R$ remain in the interval $[0,N]$.
  \im  This system has a unique  solution, since (given the boundedness of $S, I$, and $R$), the RHS above is Lipschitz.
\EEN

\Fno we will assume that $\xi=0$, and work with the fractions $\s,\i,\r$, which \saty
  \begin{align}
\s'(t)&=  -\b \s(t)\i(t) ,\nonumber\\
\i'(t)&= \i(t)\pp{\b \s(t)-\g},\label{sir}
\\
\r'(t)&= \g \i(t).\nonumber
\end{align}

We will call this the classic SIR model.
Note that
  \BEN \im   $\s(t)$ is monotonically decreasing and $ \r(t)$ is monotonically  increasing, to,  say, $s_\I, r_\I$;   therefore  convergence to some fixed stable point $(s_\I, i_\I, r_\I)$ must hold.

  \im
   the equilibrium set of stable points  is
$(s,0,1-s),      s \in [0,1].$

\im solutions starting in the domain
$$\dom:=\left\{(s,i,r): s>0,i>0,r \geq 0, s+i+r \le 1\right\}$$
    cannot leave it.

\im
The second equation of \eqr{SIR} implies  the so-called {\bf threshold phenomenon}:
if
\be{RS} \mR :=\fr{\b}{\g} \leq 1\ee then $\i(t)$ decreases always, without any intervention.

To  avoid trivialities,  we will assume $\mR >1$ \fno.

\im When $\mR >1$, the    epidemic grows iff $\s > 1/\mR$, i.e. until  the susceptibles $\s(t)$ reach the {\bf immunity threshold}
\be{IT} \Th:=\fr 1 {\mR}=\fr{\g}{\b},\ee
after which  infections decline. $\mR$ is called {\bf basic reproduction number}, and it models the number of susceptibles infected by one infectious (expected number, under more sophisticated stochastic, branching models).
\EEN

An advantage of the classic SIR model is that it is essentially solvable explicitly:
\BEN
\im
We can eliminate $\r$ from the system using the invariant $\s + \i + \r = 1$ (this is also possible for various generalizations like SIR with demography, as long as $\r$ does not appear explicitly in the rest of the equations).

\im

It can easily be verified that \be{inv}\mu(s,i):=s+i-\fr 1 {\mR}\ln(s)\ee
is  invariant, so that
 $i$ is explicitly given by
\be{iR} i_{\mR}(s) =-s +\fr 1 {\mR}\ln(s) + \muR(s_0,i_0),\ee
and the full system \eqref{sir} can be reduced to the single ODE
\be{1ode}
    \s'(t) = -\beta \s(s_0 + i_0 - \s) - \gamma \s \ln\left(\frac{\s}{s_0}\right).
\ee

\im

The  maximal value of the infected $\i$, achieved when $\s=1/\mR$,  is
 \be{im}i_{\max}=i_{\max,\mR}(s_0,i_0)= i_0+s_0 -  \fr {1+ \ln(s_0 \mR)} {\mR}  =i_0 + \f{ H(s_0 \mR)}{\mR}, \; H(\mR)= {\mR-1-\ln(\mR)}.
\ee

\im By differentiating the right-hand side of \eqref{1ode}, one finds that the
maximal value of the ``newly infected" $-\s'=\b \i \s$ is achieved when
\be{ni} \s =\i +\fr \g \b, \s=-\fr 1{2 \mR} L_{-1}\pp{-2 \mR s_0 e^{-1-\mR (s_0+i_0)}},\ee
where  the Lambert function $L_{-1}$ is a real inverse of $L(z)=z e^z$-- see \fe
\cite{pakes2015lambert,kroger2020analytical,berberan2020exact}.
Bounding $ \s \i$ is one interesting possibility for accomodating  ICU constraints \cite[(2.20)]{Mangat}.

\im The infectious class
converges to $0$ and the susceptible and recovered converge monotonically to limits which may be expressed in terms of the ``Lambert-W(right)" function $L_0$.
\EEN

Let us note that accurate numerical solutions of  the evolution of the SIR or other compartmental epidemic may be obtained very quickly.
\figu{sir}{Plot of the states of \eqr{sir}, for $\mR=2.574,\; \g=.1,\; s_0=.99,\; i_0=1-s_0,\; r_0=0 \Lra r_{\I} =s_0+i_0+ \frac{L_0[-\mR s_0 e^{-\mR(s_0+i_0)}]}{\mR}=0.903171$.
   }{.8}

\section{SIR-PH epidemics with one susceptible class  (SIR epidemics with \PH ``disease time"))
\la{s:Feng}}

 It has been known for  a long while that $\mR$ and  the final size 
 for many compartmental model epidemics may be   explicitly expressed in terms of the matrices which define  the  model, and \cite{ma2006generality,Arino,Feng,Andr} offer  a   quite   general framework of ``xyz" models which ensures this. We believe that these  formulas have not received the attention they deserve (they keep being reproved), and decided therefore to review them  below; we will call them matrix- SIR models.

  A particular but revealing case
is that when there is only one susceptible class, which we will call  SIR-PH, following Riano \cite{Riano}, who emphasized its probabilistic interpretation -- see also \cite{Hurtado}.
\beD  A ``SIR-PH $(\va,V,\bb,W)$ epidemic" contains  a homogeneous  susceptible class, but  vector ``diseased" state $\vi $ (which may contain latent/exposed, infective , asymptomatic, etc) and vector removed states (healthy, dead, vaccinated, etc). It is defined by an ODE  system:
\be{syr}
\begin{aligned}
\bc
& \s'(t)= - \s(t)\; \vi  (t)  \bb \\
&  \vi '(t)=    \s(t)  \; \vi (t)    \bb \; \va +\vi (t) A \\
&\vec \r'(t) =  \vi (t) W
\ec
\end{aligned}
\ee
where 
\BEN
\im $\vi (t) \in \mathbb{R}^n$ is a row vector whose components are fractions of diseased individuals of various types,
which must \saty $\vi (\I)=0$.
\im $ \bb \in \mathbb{R}^n$ is a column vector whose components represent the relative transmission ability of the various disease classes.  

\im $\va \in \mathbb{R}^n $ is a {\bf probability row vector} with the components representing the fractions of susceptibles entering into the corresponding disease compartments, when  infection occurs.
\im  $A$ is a $n\times n$ Markovian sub-generator matrix describing  rates of transition between the diseased classes $\vi $ (i.e., a Markovian generator matrix for which the sum of at least one row is strictly negative). Alternatively, $V:=-A$ is a  non-singular M-matrix.\fn[4]{An M-matrix is  a real matrix $V$ with  $ v_{ij} \leq 0, \forall i \neq j,$ and having eigenvalues whose real parts are nonnegative \cite{plemmons1977m}.}
\im $\vec \r(t) \in \mathbb{R}^p $ is a row vector which must \saty $\vr(\I)>0,$ whose components represent (fractions of) various  classes
which survive at the end of an infection.
\im $W\in \mathbb{R}^n \times \mathbb{R}^p$ is a matrix whose components represent the rates at which classes of diseased individuals become recovered. We assume   that the matrix $\T V =\bep A &W \eep \in \mathbb{R}^{m+n} \times \mathbb{R}^n$ has row sums $0$, which implies mass conservation.

\EEN
\eeD

We turn now to a   deceivingly simple particular example  of the SIR-PH model, which explains its name.
 \beR {\bf Probabilistic interpretation of SIR-PH epidemics}. For simplicity, let us group  all the output classes of \eqref{syr} into one $\r=\vr  \bff 1$,  yielding:
 \be{R21} \bc \s'(t)=- \s(t) \vec \i(t) \bff \b\\
\vec \i'(t)=\s(t) \vec \i(t) \bff \b \va + \vec \i(t) A\\ \r'(t)=\vec \i(t) \bff a,\ec \ee
where we put $\bff a:=W \bff 1=(-A) \bff 1$.

\eqr{R21} emphasizes the fact that SIR-PH models are in one to one correspondence with laws of \PH\ $(\va, A)$ \cite[(21)]{Riano}.

Let us recall now, as known essentially since \cite{Kurtz} -- see also \cite[Thm. 2.2.7]{Britton}--
that under proper scaling, the  expected {fractions} $\s(t)$, $\i(t), \r(t)$  of stochastic SIR\fn[4]{One such model  stipulates that each infective $j=1,...,J$ infects a randomly chosen susceptible, at encounter times which belong to independent    Poisson processes $P^j(t), j=1,...,J$, of rate $\b$, and that   infection durations  are i.i.d. r.v.'s  which are
{\bf exponentially distributed} 
at the end of which the individual recovers (or dies).} and more general { compartmental models} obey  a ``law of large numbers/fluid limit"  which recovers the \det\ epidemic.

As an example, the SIR-PH model \eqref{syr}
 may be derived as limit  of a stochastic SIR model in which the exponential infection time  has been replaced by
 a \PH\ $(\va, A)$ ``dwell period" \cite{Hurtado}.
\eeR
 \beP \la{p:Arino}
For processes defined by \eqr{syr}, with $V=-A$ a non-singular M-matrix,
the basic reproduction number is given by \cite[Thm. 2.1]{Arino}\fn[4]{This can be also derived as the expected number of susceptibles infected during a dwell period, for the stochastic model (the so-called ``survival method")--see  \cite{perasso2018introduction} for an excellent review}.
\be{R} \mR=   \va  \;  V^{-1} \bb.\ee

A disease free equilibrium $(s_0,\vec 0, \vec r_0)$ is asymptotically stable iff $s_0 < \fr 1{\mR}.$

\eeP

To illustrate the power of the SIR-PH formalism, consider now the  case with two diseased states,   latent and  infectious,     with \PH\ dwell times, parametrized by $(\va_e,A_e)$ and $(\va_i,A_i)$, \resp.
Using the \wk\ convolution formula -- see \fe \cite[Thm. 3.1.26]{bladt2017matrix} we find
that  formulas like \eqr{R} (see other examples of such formulas below) still apply, with   $(\va, A, \bb)$
given by
 \be{SEIRPH} \va=(\va_e,0), A= \bep A_e& \ba_e \va_i \\ 0 &A_i \eep,
 \bff \b=\bep 0\\0\\ \vdots\\ \b_{i,1}\\ \b_{i,2} \\ \vdots\eep. 
\ee

The ``epidemic dwell strucure" $(\va, A, \bb)$  of examples with more complicated network structures for the diseased may be constructed using
Kronecker sums of the matrices defining each component.

Let us give now an example which {\bf does not in general} belong to the SIR-PH class.

\beXa 
The  SIRV model (SIR with vaccination --see for example \cite{BBG}) is defined by:

\begin{align}
\s'(t)&=  -\s(t) \pr{\b \i(t)+ \g_s} ,\nonumber\\
\i'(t)&= \i(t)\pp{\b \s(t)-\g_i },\label{sirV}
\\
\r'(t)&=   \g_i \i(t),\nonumber\\
\v'(t)&=\g_s \s (t). \nonumber
\end{align}

This is of the form \eqr{syr} with $\vi =(\i), \vr=(\r,\v)$ iff $\g_s=0$.

In the opposite case $\g_s \neq 0$, one may still compute an invariant
\be{invv}\mu(s,i):=\b( s+i) -\g \ln(s)+ \v \ln(i), \ee
and for fixed $s$, putting $\T \g=\fr \g {\g_s}, \T \b=\fr \b {\g_s}, \T \mu_0=\fr {\mu_0} {\g_s}$, \ith
 $i$ is explicitly given by
\be{iR} i(s) =\fr 1{\T \b}L_0 \pp{ \T \b s^{\T \g} e^{\T \mu_0-\T \b s}}.\ee

 \im When $\g_s > 0$, the final size is $s_\I=0$.

\eeXa

{We provide now a list of several  formulas,  obtained by replacing  $\i$ in SIR  by a  scalar
linear combination \eqr{Y} \cite{Feng}. They are all easily proved; \how\
 the formula  for the
maximal value of the newly infected involves also a second linear combination $\y $ \eqr{y}. }

\beP \la{p:Feng}
For processes defined by \eqr{syr}, with $V=-A$ a non-singular M-matrix, \ith:
\BEN

\im The following  \textit{weighted sum of the diseased variables}   \cite[(24)]{Feng}
\be{Y}
Y(t)= \fr{\vi (t)  \;  V^{-1} \bb}{\va  \;  V^{-1} \bb} =\fr{1}{\mR} \; \vi (t)  \;  V^{-1} \bb
\ee
has the property that
\be{invd} \frac{dY}{d\s} =  \fr {\fr{1}{\mR} \vi (t) \pr{  \s(t) \;   \bb \; \va - V}V^{-1} \bb}{- \s(t)\; \vi  (t)  \bb}=
\fr { \vi (t) \bb \; \pr{  \s(t) \;    \mR -1}}{-{\mR} \s(t)\; \vi  (t)  \bb}=
 -1 + \frac{1}{\mR \s}, \ee
 \text{ and that }
\be{inv-SYR}\bc  z(t)=\mu(\s(t),Y(t)):=Y(t)+ \s(t)-\fr 1 {\mR} {ln[\s(t)]},\\
 Z(t)=e^{ -{\mR} z(t)} =\s(t) e^{-\mR (\s(t)+Y(t))} \ec\ee
are constant along the paths of the dynamical system \eqr{syr}.

The solution of $Z(s)=Z(0)$  \wrt\ $\s$ \mbe
\be{sY}
  \s(t) =-\fr {1}{\mR}  L_0 \pp{-{ \mR}{Z_0}  e^{\mR Y(t)}},\ee
   where $[-e^{-1}, \I) \ni z \to L_0(z) \in [-1,\I) $ is the principal branch of the  Lambert-W function.

   \im The derivative with respect to time is
\be{y}\frd{Y}{t}=\left(\s-\fr 1{\mR}\right)\vi  \bb:=\left(\s-\fr 1{\mR}\right)\y.\ee
\Thr $\frd{Y}{t}=0=\fr{dY}{ds}$ iff $ s=\mR^{-1}$.

\im
The maximum value of $Y$ occurs for $s =  \min[1/\mR,1]$.
In the case $\mR >1$,
this yields \cite[Sec. 2.1]{Feng}:
\be{mp} Y_0 +\s_0 - Y^*-\fr 1 {\mR} = \fr 1 {\mR} \ln({\s_0 \mR}),\ee
by the conservation of $Y(t)+\s(t)-\fr 1 {\mR} \ln({\s(t)})$ between  the time $0$ and the time $t_{1/\mR}$ of reaching the immunity threshold).

\im The final size  of the susceptibles  \sats \cite[Thm.5.1]{Arino}:
\be{fs}  {ln[\s_0/\s_\I]}={\mR}\pr{\s_0 -\s_\I} +  \vi _0 V^{-1} \bb={\mR}\pr{\s_0 -\s_\I+ Y_0}, \ee
by the conservation of $Y(t)+\s(t)-\fr 1 {\mR} \ln({\s(t)})$ between  the times $0$ and $\I$; explicitly,

\be{si}
  \s_\I =-\fr {1}{\mR}  L_0 \pp{-{ \mR}{Z_0}}=-\fr {1}{\mR}  L_0 \pp{-{ \mR}{s_0 e^{- \mR(s_0 + Y_0)}}}\ee

\im   The integrated infectives  $\vI{a,b}=\int_a^b \vi (s) ds$
\sats\
\be{ini} \bc \vI{a,b} \;  V= \vJ_a-\vJ_b, \vec J_s:= \vi (s)+ s \va\\
 \pr{\vJ_a-\vJ_b} V^{-1} \bb= \log(\fr{\s(a)}{\s(b)}) \ec, \ee
 and the total integrated infectives  $\vI{\I}=\int_0^\I \vi (s) ds$
\sats\ \cite[(6)]{Arino}
\be{ini-total} \vI{\I} \;  V= \vi _0+ (s_0-s_\I)\va . \ee

\beR \la{r:int} In particular, for the SIR model \eqr{sir},
\be{Isi}\log\left(\fr{\s(a)}{\s(b)}\right)=\b I^{(a,b)} =\mR (J_a-J_b), \; J=\s+ \i.\ee

Note that this has been used to model the total cost of an epidemic \cite{gani1972cost}.
\eeR

\im The final size  of the removed  \sats:
\be{rec} \vr_\I-\vec r_0 =I^{(\I)} \;  W=\pr{\vi _0+ (s_0-s_\I)\va} V^{-1} W , \ee

\im

The  value of the infected combination  $Y$  when $\s=1/\mR$  is
 \be{im-SYR}Y_{\max}=i_{\max,\mR}(s_0,\vi _0)= Y_0+s_0 -  \fr {1+ \ln(s_0 \mR)} {\mR}  =i_0 + \f{ H(s_0 \mR)}{\mR}, \; H(\mR)= {\mR-1-\ln(\mR)}.
\ee

\im The maximum size of the newly infected is achieved when
    \be{sdsmax} s(t) =\frac{\y^2 +\mR Y(t)}{\va \bb \y}. 
    \ee
\EEN
\eeP

\beR Let us note that for control problems involving optimization objectives which only depend on $Y(t)$, we are  effectively optimizing a SIR model;
this SIR approximation may be used to offer practical solutions for optimizing more complicated compartmental models. 
\eeR

\beXa For SEIR, putting $\vi  =(\e , \i)$, we may write
 \bea 
 \bc
 \s'= -\b \s \i \\
 \vi  '=   (\b \s \i  - \g_e \e, \g_e \e  -\g \i)= \b \s \i (1,0)  -(\e,\i) \bep  \g_e & -\g_e \\ 0& \g \eep   \\
 \r'= \g \i
 \ec \eea
so that   $\bc  \va= (1,0) \\  \bff b= \bep
0 \\\b
\eep\\V= \bep  \g_e & -\g_e \\ 0& \g \eep \Lra V^{-1} = \fr 1{\g \g_e} \bep  \g & \g_e \\ 0& \g_e \eep,  Y = \fr{\bep \e,\i \eep \bep  \g & \g_e \\ 0& \g_e \eep \bep
0 \\\b
\eep}{\bep 1,0 \eep \bep  \g & \g_e \\ 0& \g_e \eep \bep
0 \\\b
\eep}=\e +  \i\\ W= \bep 0&\g\eep \ec.$

\eeXa

\section{Examples of SIR-PH models used in COVID-19 modelling \la{s:exa}}

{We derive now $\mR$ and $Y$ from \eqr{R},  \eqr{Y}, for some  popular compartmental models.} Note that we will be  reformulating the original results (which, unfortunately,  have already appeared several times with different notations), using  a
unifying notation.

\beXa The SEIHRD model  \cite{ivorra2017stability,Palmer,pazos2020control,nave2020theta,ramos2021simple} has disease states $\vi  =(\e , \i, \h)$. We use here  the version in  \cite{pazos2020control} (we would rather call this S$I^2$HRD model),   defined by $$\va=\bep 1,0,0\eep, \bb=\bep \b_e\\\b_i\\0\eep, V=\left(\begin{array}{ccc}
 \g_e  & -e_i  & 0 \\
 0 & \gamma_i  & -i_h \\
 0 & 0 & \g_h \\
\end{array}
\right), W =\bep e_r&0\\i_r&0\\h_r&h_d \eep,
$$ where we denoted by $\g_e,\g_i, \g_h$ the sum of the constant rates out of $\e, \i, \h$, and by $i _h$ the rate out of $\i$ and reaching $\h$, etc. Then, $\mR=\fr {\b_e}{\g_e}+ \fr {e_i }{\g_e  } \fr { \b_i}{\g_i }$ \cite[2]{pazos2020control}, 
and $ Y=\e+\i \frac{ \gamma _e \beta _i}{\beta _e \gamma _i+e_i \beta _i}.$  When $\b_e=0=e_r \Lra \fr{e_i}{\g_e} =1$, we recover $\mR=\fr {\b_i}{\g_i}$ \cite{Palmer} and $ Y=\e+\i.$

\begin{figure}[H]
\centering
\includegraphics[scale=0.8]{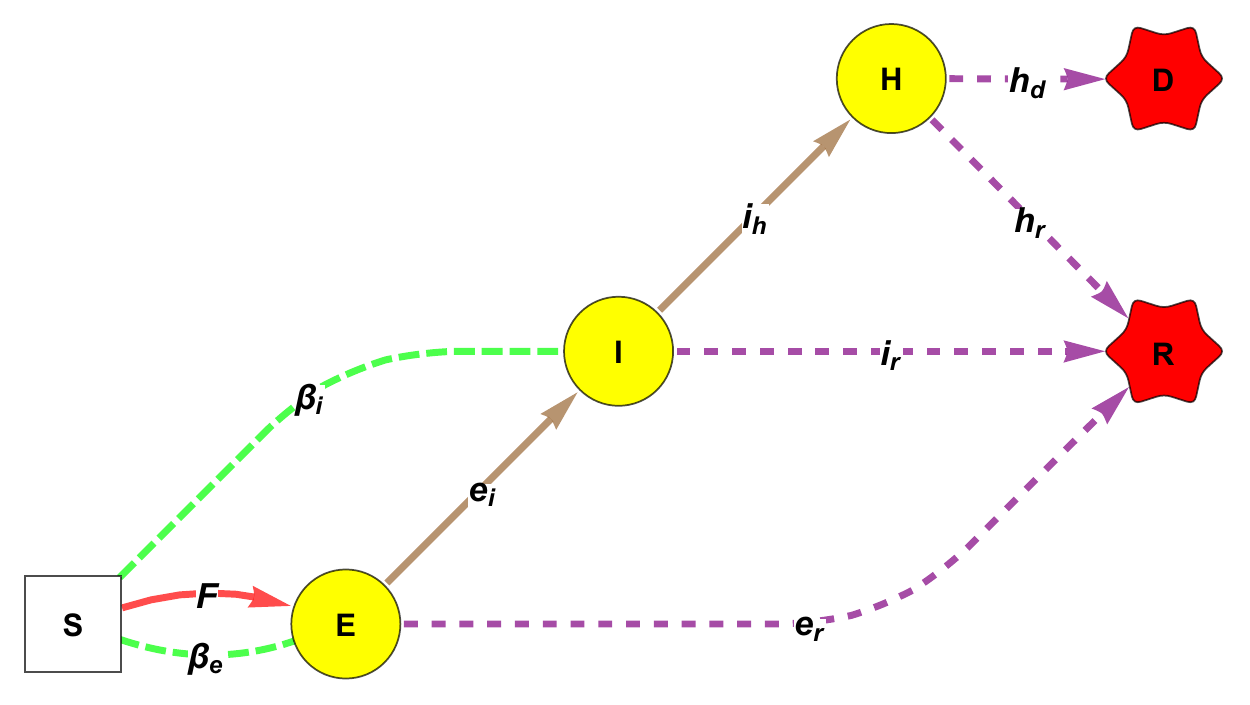}
\caption{Chart flow of the SEIHRD model. The forces of infection  are $F_{se}= \b_e \e,\; F_{si}= \b_i \i  , F=F_{se}+F_{si}.$
The red edge corresponds to the entrance of susceptibles into the diseased classes; $\va$, the dashed green edges correspond to contacts between  diseased to susceptibles, the brown edges are the rates of the transition matrix $V$, and the remaining yellow dashed flows correspond to the rates of $W$. \label{f:SEIHRD}}
\end{figure}
\eeXa
\beXa The SEIHCRD model of \cite{Kantner} has disease states $\vi  =(\e , \i, \h, \c)$ and is defined by $$\bb=\bep 0\\\b_i\\0\\0\eep, \va=\bep 1,0,0,0\eep, V=
\bep
 \g_e  & -e_i &0  & 0 \\
 0 & \g_i  &-i_h& 0 \\
 0 & 0& \g_h &-h_c \\
 0 & 0 &-c_h& \g_c \\
\eep , \; W =\bep 0&0\\i_r&0\\h_r&0 \\
0 & c_d\eep
,$$
then,
$$ \mR=\frac{\b_i}{ \gamma _i}, Y=\e+\i.$$

 \begin{figure}[H]
\centering
\includegraphics[scale=0.8]{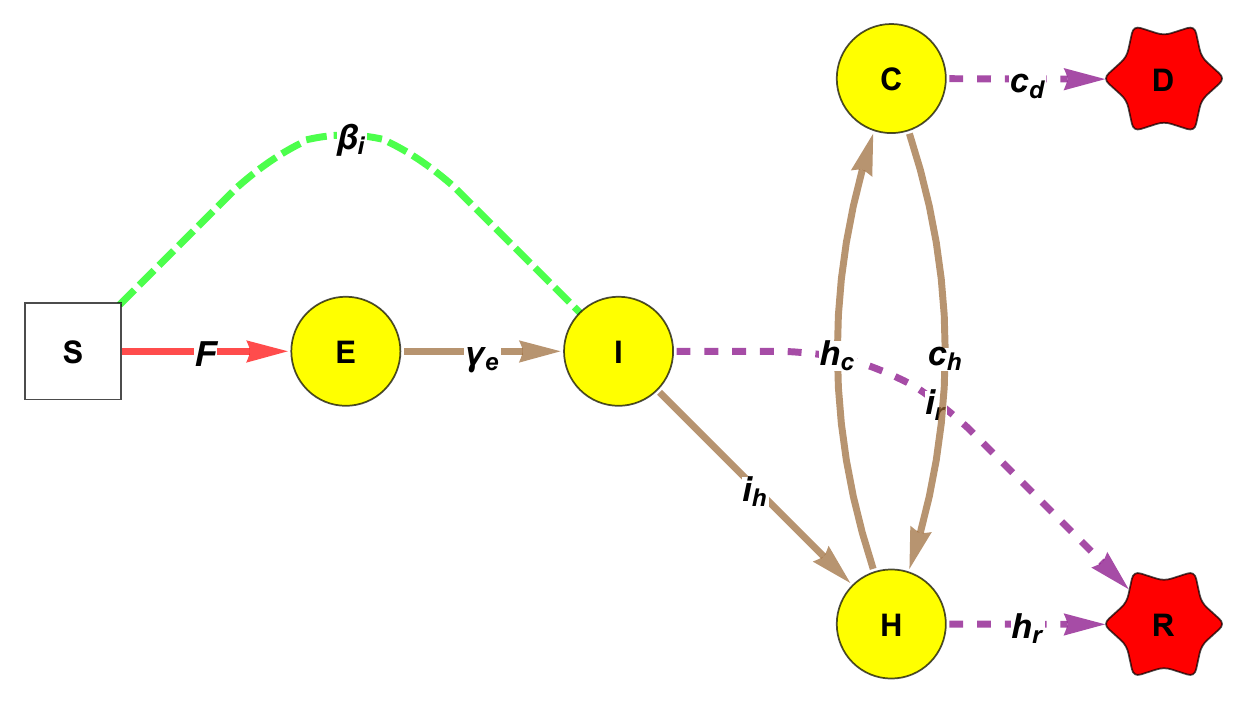}
\caption{Chart flow of SEIHCRD model. 
The force of infection  is $ F=\b_i \i.$
\label{f:SEIHCRD}}
\end{figure}
\eeXa

\beXa The SEIRAH(SEIAHR) model  \cite{
deng2021extended,prague2020population} has disease states $\vi  =(\e , \i, \a,\h)$ and
is defined by $$\bb=\bep 0\\\b_i\\\b_a\\0\eep, \va=\bep 1,0,0,0\eep, V=
\bep
 \g_e  & -e_i &-e_a  & 0 \\
 0 & \g_i  &0& -i_h \\
 0 & 0& \g_a &-a_h \\
 0 & 0 &0& \g_h \\
\eep,\;  W =\bep e_r\\i_r\\ a_r \\ \g_h \eep.$$

Then, $$\mR=\frac{ e_a}{\gamma _e } \mR _a +\frac{e_i }{\gamma _e } \mR _i, \quad {\mR _i }=\frac{\b _i }{\g_i}, {\mR _a }=\frac{\b _a }{\g_a},$$ 
and $$ Y=\e+\i \frac{\mR _i }{\mR}+\a \frac{\mR _a }{\mR}.$$
   \begin{figure}[H]
\centering
\includegraphics[scale=0.8]{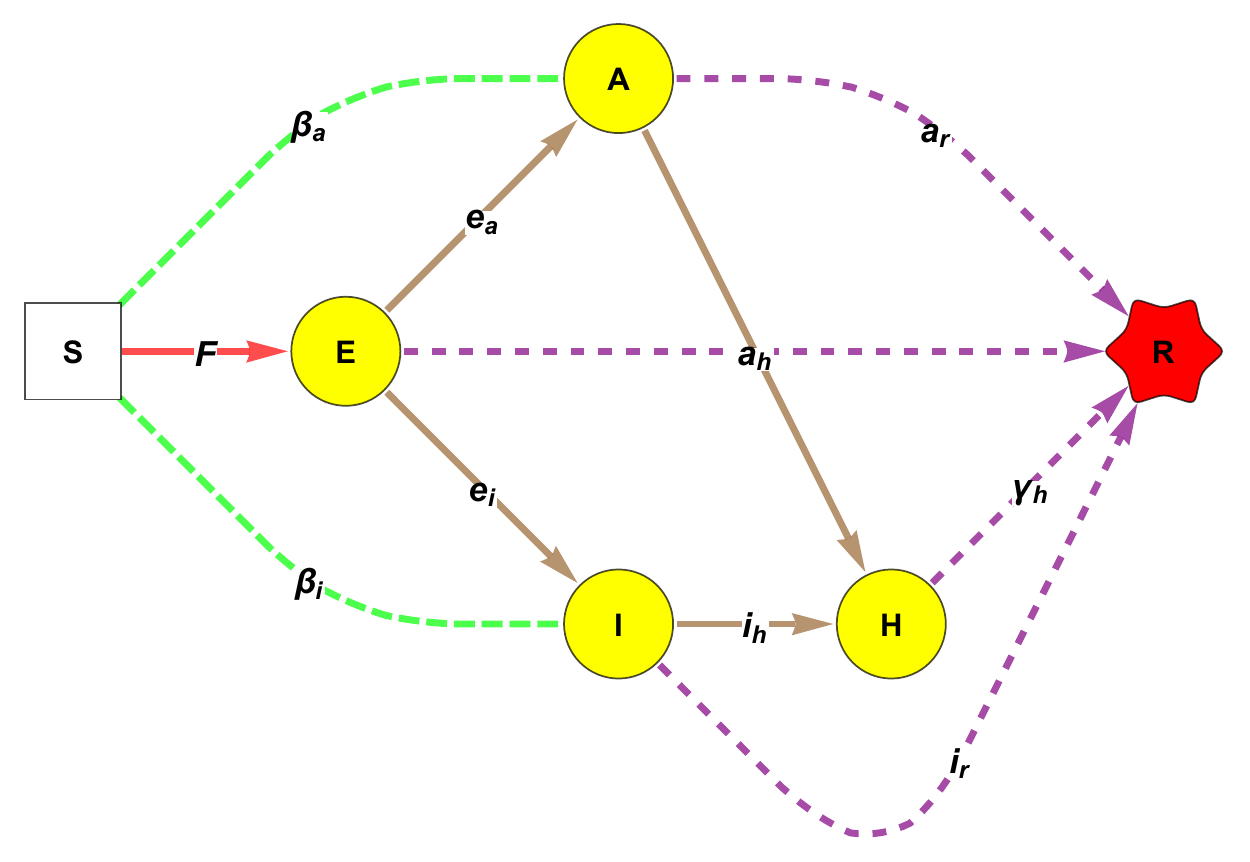}
\caption{Chart flow of the SEIAHR model. 
The forces of infection  are $F_{si}= \b_i \i,\; F_{sa}= \b_a \a  , F=F_{si}+F_{sa}.$
\label{f:SEIRAH}}
\end{figure} \eeXa

\beXa The S$I^{aps}$QR model (with asymptomatic, pre-symptomatic and  symptomatic infectious)  \cite[(3.2)]{shaw2021reproductive}, \cite{Hadany}.

\begin{figure}[H]
\centering
\includegraphics[scale=0.8]{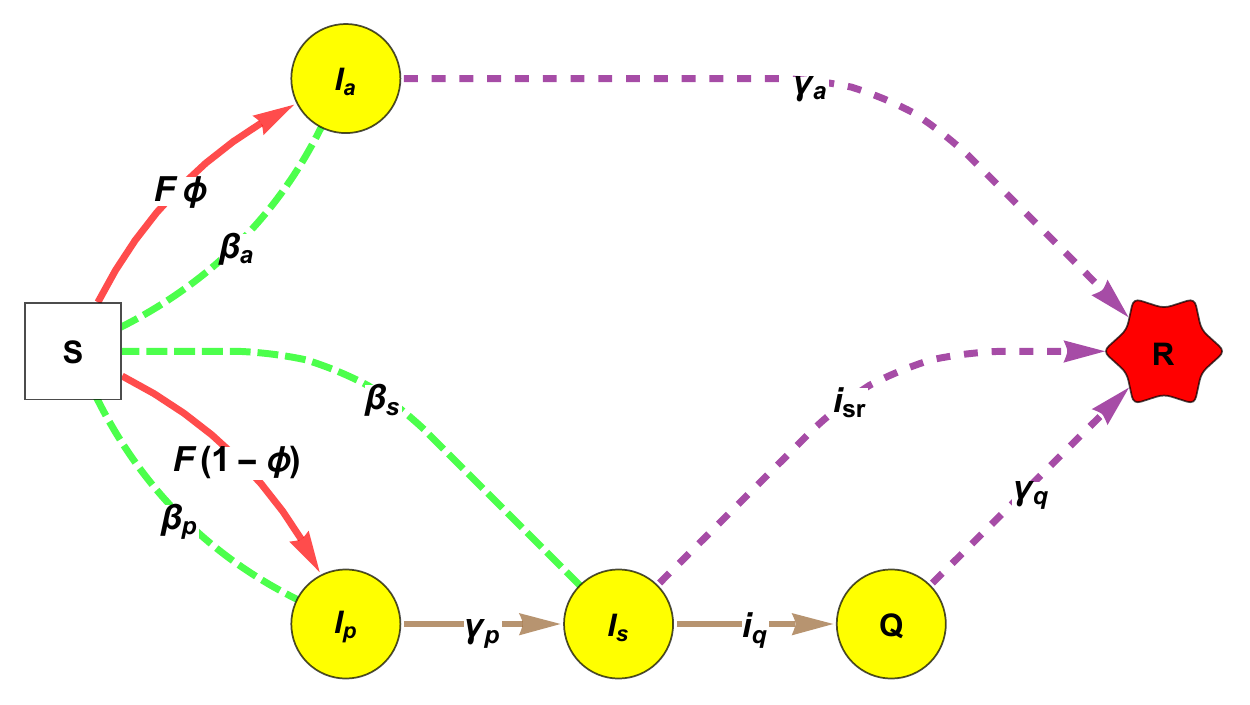}
\caption{Chart flow of S$I^{aps}$QR model.  The forces of infection (rates) are $F_{sa}= \b_a \i_a,\; F_{sp}= \b_p \i_p,\; F_{ss}= \b_s \i_s  , F=F_{sa}+F_{sp} + F_{ss}.$\label{f:SIapsQR}}
\end{figure}

The disease states are $\vi  =(\i_a , \i_p, \i_s,\q)$, and the model is defined by
 $$\bb=\bep \b_a\\\b_p\\\b_s\\0\eep, \va=\bep \phi,1-\phi,0,0\eep, V=
\left(
\begin{array}{cccc}
 \gamma _a & 0 & 0 & 0 \\
 0 & \gamma _p & -\g_p & 0 \\
 0 & 0 & \gamma _s & -i_q \\
 0 & 0 & 0 & \gamma _q \\
\end{array}
\right),\; W =\bep \g_{a}\\ 0\\ i_{sr} \\ \g_{q} \eep.$$

Then, \cite{shaw2021reproductive}, $$\mR= \phi \frac{\b_a}{\g_a}+ (1-\phi)\left( \frac{\b_p}{\g_p}+ \; \frac{\b_s}{\g_s}\right),$$

and, $$Y= \frac{1}{\mR} \; \left[\i_a \; \mR_a + \i_p(\mR_p +\mR_s) + \i_s \mR_s  \right], $$
 where $\mR_a:= \frac{\b_a}{\g_a}, \; \mR_p:= \frac{\b_p}{\g_p}, \; \mbox{and} \quad \mR_s:= \frac{\b_s}{\g_s}.$

\eeXa

\section{$S^{(m)}YR$ models with $m$ groups of susceptibles \la{s:het}}
The SIR-compartment model  makes the unrealistic assumption that the population through which the disease is spreading is well-mixed.
However, differences in susceptibility and rates of contact between individuals strongly affect their likelihood of catching COVID-19. A model which attempts to capture this  aspect is:
\be{syrW}
\begin{aligned}
\bc
& \s_k'(t)= -  \s_k(t)\; \vi (t)  \bb_k, \; \; k=1,...,m \\
&  \vi'(t)=   \sum_{k=1}^m   \s_k(t) \; \vi(t)  \bb_k \;  \va -\vi(t) V \\
&\vec \r'(t) =  \vi(t) W
\ec
\end{aligned}
\ee

\beL
A disease free equilibrium $(s_1, s_2, \dots, s_m, \vec 0, \vec r_0)$ of \eqref{syrW}
is asymptotically stable iff $s \mR <1,$ where $s = \sum_k s_k$ and
 \be{R0mat}
 \mR =  \sum_k \frac{s_k}{s} \;  \va\; V^{-1} \bb_k=  \sum_k \frac{s_k}{s} \;  \mR_k, \quad \mR_k:=\va\; V^{-1} \bb_k\ee
 is the spectral radius of the next generation matrix.
\eeL

While the final size may also be obtained under this model \cite[Thm. 2.1]{Andr}, for transient behavior
it is convenient  to turn to a simpler model.

\ssec{A generalization of heterogeneous SEIR \cite{Dolb}}

  Assume now that $\bb_k =\b_k \bb,$ where $\b_k \in \R_+$ and W=$\bb \vec w$,
  where $\vec w$ is a row vector. Putting  $\y=\vi \bb$,
the  system \eqr{syrW} becomes:\fn[4]{Such a dynamics was first considered in \cite{Gart}.}
\begin{equation}
\frac{d \log\s_k}{dt} = - \beta_k\,\y\,,\quad
\frac{d\vi}{dt}= \(\,\sum_{k=1}^m\beta_k\,\s_k\)\y \va -  \vi\,V,\quad
\frac{d\vr}{dt}=\vi W=\y \vec w,
\label{syr2}
\end{equation}
and \be{I}{\vr(t)}=\int_0^t \y(u)du \; \vec w:=I(t) \; \vec w.\ee

It is convenient to reparametrize the model taking $I$ as parameter, or, equivalently, by taking
   \be{gam}\r =\vr \bff 1=\g I, \g:= \vec w \bff 1.\ee

   Solving $$\frac d{dt}\(\log\s_k\)=-\,\beta_k\,\y=-\,\frac{\beta_k}\gamma\,\frac d{dt}\r\, ,$$
 we find that the system has a family of first integrals which includes
\be{seir2:conservation}
 \bc \fr{1}{\mR_1} \log(s_1/s_1(0))=...= \fr{1}{\mR_k} \log(s_k/s_k(0))=...= \fr{1}{\mR_m} \log(s_m/s_m(0))=r(0) -r,\\
\sum_{k=1}^m s_k+\sum_{k=1}^n y_k + \sum_{k=1}^p r_k=1\ec.
\ee

Also
\be{s_k(t)}
\s_k(t)=\s_k(0)\,e^{-{\beta_k\,I(t)}}=\s_k(0)\,e^{-{\fr{\beta_k}\g\,\r(t)}},
\ee
where $\g$ is defined in \eqr{gam}.

 We conclude with some preliminary  results on this model.

\begin{lemma} \la{l:S}  a) \cite{Dolb} Put $\s(t)=\sum_k \s_k(t), p_k = \s_k(0) / \s(0)$.
The solution  of~\eqref{syr2} satisfies
the time dependent SYR system:
\begin{equation}
\frac{d\s}{dt} = -\,\,a(t)\,\s(t)\,\y(t) \,,\quad\frac{d\vi}{dt} = \pr{\sum_k \beta_k\, \s_k(t)}\,\y(t) \, \va - \vi(t) V\,,\quad\frac{d\r}{dt} = \gamma\,\y(t) \,,
\label{eq:alternativeseir}
\end{equation}
where
\be{def:a}
a(t)=\frac{\sum_k \beta_k\, \s_k(t)}{\s(t)}
\ee
 is a positive non-increasing function  with $a(0)= \sum_k p_k\,\beta_k:=\bar{\beta}  $.
\label{lemma:attentuation}

b)  $Y(t)$  defined in \eqr{Y} with $\mR=\sum_k p_k \mR_k, \mR_k  = \b_k \va V^{-1} \bb,$ \sats:
\be{Yt}\fr 1{\y(t)} \frd{Y}{t}= \fr{1}{\mR}(\sum_k  s_k(t) \mR_k-1)=\fr{\mR_e(t)-1}{\mR}=\g  \frd{Y}{\r}\ee
and is unimodal, with a maximum on the immunity/recovery line
\be{Imline} \sum_k  s_k {\mR_k}= 1.\ee

c)   The stable stationary solution $(\s_1^\star,\s_2^\star,...,0,0,...,\vr^\star)$ is determined  by the unique solution with $\r^\star=r>\r(0)$ of
\be{Eq:Equilibrium2}
1=\sum_{k=1}^m\s_k(0)\,e^{-\frac{\beta_k\,r}\gamma}+r\,,\quad\s_k^\star=
\s_k(0)\,e^{-\frac{\beta_k\,r}\gamma}\,,\quad k=1,\,2, \,...\, ,m.
\ee

\end{lemma}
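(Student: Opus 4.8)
I would prove the three parts in order, each reducing to algebra on \eqref{syr2} together with the first integrals \eqref{s_k(t)} and \eqref{seir2:conservation}. \emph{Part (a).} First, differentiate $\s=\sum_k\s_k$: multiplying the $k$-th logarithmic equation in \eqref{syr2} by $\s_k$ and summing gives
\[
\s'(t)=\sum_k\s_k'(t)=-\y(t)\sum_k\beta_k\s_k(t)=-\,a(t)\,\s(t)\,\y(t),
\]
with $a$ as in \eqref{def:a}; the $\vi$- and $\r$-lines of \eqref{eq:alternativeseir} are literally the last two lines of \eqref{syr2} (using $\gamma=\vec w\,\bff 1$), so nothing is needed there. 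Positivity of $a$ is clear from $\s_k(t)=\s_k(0)e^{-\beta_k I(t)}>0$ and $\bb\ge 0$. For monotonicity I would pass to the probability weights $q_k(t)=\s_k(t)/\s(t)$, so $a=\sum_k\beta_k q_k$ and $q_k'=q_k\big(\s_k'/\s_k-\s'/\s\big)=q_k\,\y\,(a-\beta_k)$; substituting yields the key identity
\[
a'(t)=\y(t)\Big(a(t)^2-\sum_k\beta_k^2 q_k(t)\Big)=-\,\y(t)\,\mathrm{Var}_{q(t)}(\beta)\ \le\ 0,
\]
since $\y=\vi\bb\ge 0$ on the invariant region and the variance is nonnegative. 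Evaluating at $t=0$ gives $q_k(0)=p_k$, hence $a(0)=\bar\beta$.

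\emph{Part (b).} With $Y=\mR^{-1}\vi V^{-1}\bb$, $\mR=\bar\beta\,\va V^{-1}\bb$, and $\mR_k=\beta_k\,\va V^{-1}\bb$, I would differentiate using the $\vi$-equation of \eqref{syr2} and the telescoping $\vi V V^{-1}\bb=\vi\bb=\y$:
\[
\frac{dY}{dt}=\frac1{\mR}\Big[\big(\textstyle\sum_k\beta_k\s_k\big)\y\,\va V^{-1}\bb-\y\Big]=\frac{\y}{\mR}\Big(\textstyle\sum_k\s_k\mR_k-1\Big),
\]
the second step using $\sum_k\s_k\mR_k=(\sum_k\beta_k\s_k)\,\va V^{-1}\bb$ and $\va V^{-1}\bb=\mR/\bar\beta$; this is \eqref{Yt}, with the last equality there coming from $\r'=\gamma\y$. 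For unimodality I would invoke \eqref{s_k(t)}: $I(t)=\int_0^t\y$ is nondecreasing, and strictly increasing while $\y>0$, so every $\s_k(t)$ and hence $\mR_e(t)=\sum_k\s_k(t)\mR_k$ is strictly decreasing; therefore $dY/dt$, which has the sign of $\mR_e(t)-1$, is first positive then negative, giving unimodality with the maximum exactly on the line $\mR_e=1$, i.e.\ \eqref{Imline}. (If $\mR_e(0)\le 1$ this degenerates to $Y$ non-increasing, which should be noted.)

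\emph{Part (c).} At a stationary solution $\y=0$, so \eqref{s_k(t)} forces $\s_k^\star=\s_k(0)e^{-\beta_k r/\gamma}$ with $r=\r^\star$; inserting this and $y_k=0$ into the conservation law of \eqref{seir2:conservation} gives the scalar equation $F(r)=1$ with $F(r):=r+\sum_k\s_k(0)e^{-\beta_k r/\gamma}$, which is \eqref{Eq:Equilibrium2}. The function $F$ is convex ($F''=\gamma^{-2}\sum_k\beta_k^2\s_k(0)e^{-\beta_k r/\gamma}>0$) with $F(+\infty)=+\infty$, and
\[
F(\r(0))=\r(0)+\sum_k\s_k(0)e^{-\beta_k\r(0)/\gamma}\ \le\ \r(0)+\s(0)=1-\sum_k y_k(0)\ <\ 1
\]
as long as the epidemic is active at time $0$; hence on $[\r(0),\infty)$ the graph of $F$ stays below $1$ up to its (unique) minimum and then increases strictly to $\infty$, so $F=1$ has exactly one root $r^\star>\r(0)$, and this determines $\s_k^\star$. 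To conclude that the solution actually converges there, I would note that $r(t)$ is nondecreasing and bounded by \eqref{seir2:conservation}, hence convergent, which forces $\y(t)\to 0$ and $\s_k(t)\to\s_k(0)e^{-\beta_k r^\star/\gamma}$; the limit lies in the disease-free set, to which the stability criterion of the preceding Lemma applies.

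\emph{Main obstacle.} Parts (a)--(b) are short once the bookkeeping variables $q_k$, $\mathrm{Var}_q(\beta)$ and $\mR_e(t)$ are in place. The delicate point is (c): beyond the (routine) convexity argument for uniqueness of $r^\star$, one must genuinely justify convergence of the trajectory to this equilibrium --- setting up the invariant region, using monotonicity of $r(t)$, and excluding $\y(t)\not\to 0$. The degenerate case $\mR_e(0)\le 1$ in (b) should also be stated explicitly rather than glossed over.
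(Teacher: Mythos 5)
Your proposal is correct and follows essentially the same route as the paper: in (a) your identity $a'=-\y\,\mathrm{Var}_{q}(\beta)\le 0$ is exactly the paper's Cauchy--Schwarz computation \eqref{decay:a} rewritten in the probability weights $q_k=\s_k/\s$, and in (b) you differentiate $Y=\mR^{-1}\vi V^{-1}\bb$ along the $\vi$-equation precisely as the paper does. For (c) the paper only remarks that the claim ``follows from the conservation of mass and $\vi(\I)=0$'' (citing \cite{Dolb}), so your convexity/uniqueness argument for $F(r)=1$ and the monotone-convergence discussion supply details the paper leaves to the reference; they are consistent with, and a useful elaboration of, the intended proof.
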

\begin{proof}
a)  The derivative of $a$ \sats
\be{decay:a}
\frac{da}{dt} = - \frac{\vi(t) \bb}{\bar\beta\,\s(t)^2}\({\textstyle\sum_k\s_k(t)\,\sum_k \beta_k^2\,\s_k(t)-\big( \sum_k \beta_k\,\s_k(t) \big)^2}\)\le0
\ee
by the Cauchy-Schwarz inequality.

b) $Y(t)=\fr{1}{\mR}  \vi(t)  \;  V^{-1} \bb \Lra $
\bea &&Y'(t)=\fr{1}{\mR}  \vi'(t)  \;  V^{-1} \bb =
\fr{1}{\mR}
\pp{\pr{\sum_k\, \s_k(t) \, \beta_k} \y(t) \, \va - \vi(t) V}
 V^{-1} \bb\\&&=\fr{\y(t)}{\mR} \pp{\pr{\sum_k\, \s_k(t) \, \mR_k } -  1}
\eea

c) This follows from the conservation of mass and $\vi(\I)=0$ \cite{Dolb}.

\end{proof}

%


\bibliographystyle{amsalpha}

\bibliography{Pare38}

\end{document}